\newcommand{\be}{\begin{equation}}
\newcommand{\ee}{\end{equation}}
\theoremstyle{plain}
\newtheorem{Remark}{Remark}
\newtheorem{Theorem}{Theorem}
\newtheorem{Lemma}{Lemma}
\newtheorem{Proposition}{Proposition}
\newtheorem{Definition}{Definition}
\begin{document}

\title{The {\em Relativistic} Hopfield network: rigorous results}

\author{Elena Agliari}
\affiliation{Dipartimento di Matematica, Sapienza Universit\`a di Roma, Italy}
\author{Adriano Barra}
\affiliation{Dipartimento di Matematica e Fisica Ennio De Giorgi, Universit\`a del Salento, Italy}
\author{Matteo Notarnicola}
\affiliation{Dipartimento di Matematica, Sapienza Universit\`a di Roma, Italy}
\affiliation{Dipartimento di Matematica e Fisica Ennio De Giorgi, Universit\`a del Salento, Italy}

\begin{abstract}The relativistic Hopfield model constitutes a generalization of the standard Hopfield model that is derived by the formal analogy between the statistical-mechanic framework embedding neural networks and the Lagrangian mechanics describing a fictitious single-particle motion in the space of the tuneable parameters of the network itself. In this analogy the cost-function of the Hopfield model plays as the standard kinetic-energy term and its related Mattis overlap (naturally bounded by one) plays as the velocity. The Hamiltonian of the relativisitc model, once Taylor-expanded, results in a $P$-spin series with alternate signs: the attractive contributions enhance the information-storage capabilities of the network, while the repulsive contributions allow for an easier unlearning of spurious states, conferring overall more robustness to the system as a whole.
\newline
Here we do not deepen the information processing skills of this generalized Hopfield network, rather we focus on its statistical mechanical foundation. In particular, relying on Guerra's interpolation techniques, we prove the existence of the infinite volume limit for the model free-energy and we give its explicit expression in terms of the Mattis overlaps. By extremizing the free energy over the latter we get the generalized self-consistent equations for these overlaps, as well as a picture of criticality that is further corroborated by a fluctuation analysis.
\newline
These findings are in full agreement with the available previous results.
\end{abstract}


\maketitle

\section{Introduction: a glance at the {\em relativistic} Hopfield network}

Recent advances, in hardware (mainly due to the novel generation of GPU computing architectures over the standard CPU ones \cite{GPU}) as well as in software (mainly due to the novel generation of algorithmic prescriptions overall termed {\em Deep Learning} \cite{DLbook}), have made neural networks pervasive in every-day life and this obviously raised the quest for stronger mathematical frameworks where these algorithms and models can be suitably analyzed, controlled or even {\em understood} \cite{DL1}.
\newline
To this goal, statistical mechanics has proved to be a particularly convenient tool, as evidenced by the seminal work by John Hopfield \cite{Hopfield} and the successive analysis carried out by Amit-Gutfreund-Somponlinsky \cite{Amit,Coolen,Viktor}, and it will constitute the main tool exploited in this paper too.
\newline
In the remaining of this introductory section, we briefly revise the classical and the relativistic Hopfield models, pointing out their capabilities as neural networks, within a statistical mechanical setting. Then, in the next two sections, we focus on the relativistic generalization and, in particular, in Sec.~\ref{sec:2} we prove the existence of the thermodynamic limit for its free energy\footnote{Notice that, once the existence of the thermodynamic limit is proved for the free energy, it holds, in a cascade fashion, also for other various quantities of interest, as entropy and internal energy \cite{Guerra-LimTerm1}.}, while in Sec.~\ref{sec:3} we give an explicit expression for such a quantity (that turns out to coincide sharply with the one obtained with mechanical techniques in \cite{Albert}) in terms of the natural order parameters of the theory, namely the Mattis overlaps; moreover, by extremizing the free energy over the Mattis overlaps we obtain self-consistent expressions for their evolution that is further confirmed by a study of the fluctuations of the (rescaled and centered) Mattis overlaps (to inspect ergodicity breaking and the critical behavior of the system). Finally, Sec.~\ref{sec:4} is left for our conclusions and outlooks.

\subsection{The {\em classical} Hopfield network in a nutshell}

Consider $N$ Boolean (i.e., Ising) spins (or {\em neurons}) $\sigma_i = \pm 1$, $i \in (1,...,N)$ and $P$ patterns $\xi^{\mu}$, $\ \mu \in (1,...,P)$ of length $N$ whose entries are identically and independently drawn from
\be \label{eq:iid}
P(\xi_i^{\mu}=+1)=P(\xi_i^{\mu}=-1)=1/2.
\ee
Note that  we are taking the pattern entries completely at random: this choice may sound poorly realistic, yet it ensures, by a Shannon compression argument, that if the network is able to deal with $P$ entirely random patterns, it will be likewise able to deal with -at least- $P$ structured ones.
\begin{Definition}
The cost function (or {\em Hamiltonian} to keep a physical jargon) of the classic Hopfield model can be written as
\be\label{Hopfield-classico}
H_N^{\textrm{(c)}}(\boldsymbol \sigma| \boldsymbol \xi) :=-\frac{1}{2N}\sum_{i,j}^{N,N}\left(\sum_{\mu=1}^{P}\xi_i^{\mu}\xi_j^{\mu}\right) \sigma_i \sigma_j.
\ee
\end{Definition}
In order to have an intuition about the {\em spontaneous associative memory} capabilities of this model, one quantifies the retrieval pertaining to each pattern and this is accomplished by the following order parameters
\begin{Definition}
For each stored pattern of information $\xi^{\mu}$, we introduce the Mattis overlap $m_{\mu}$ defined as
\be
m_{\mu} := \frac{1}{N}\sum_{i=1}^{N}\xi_i^{\mu}\sigma_i \in [-1+1].
\ee
\end{Definition}
\noindent
Notice that the Hopfield Hamitonian can then be written as $H_N^{\textrm{(c)}}(\boldsymbol \sigma| \boldsymbol \xi)=- N \sum_{\mu=1}^{P} m_{\mu}^2$. This expression highlights that, if the neural configuration $\boldsymbol \sigma $ is uncorrelated with any of the $P$ patterns, the scalar product of the vector state $(\sigma_1, \sigma_2,...,\sigma_N)$ with any of these patterns, say $\boldsymbol{\xi}^{\mu} = (\xi_1^{\mu}, \xi_2^{\mu},...,\xi_N^{\mu})$, would vanish as $N^{-1/2}$, hence its contribution to lower the cost function (namely the {\em energy of the system}) would be rather marginal; one the other hand, if the neural configuration is highly correlated with one of the patterns, then its corresponding Mattis overlap would be $O(1)$ and this would significantly decrease the energy of the system: as the Hamiltonian is parabolic in the Mattis overlap, this observation candidates the $P$ patterns to act as {\em attractors} for any reasonable network's dynamics \cite{Coolen}. Therefore, if the network is fed with partial information concerning one pattern (for instance a corrupted pattern is presented to the network\footnote{For instance, information can be supplied to the network as an {\em external field} to keep a physical jargon.}), it will autonomously denoise the supplied information and find out the correct (pure) reference pattern (if the noise level is not too high \cite{Amit}).
It is worth noticing that, according to (\ref{eq:iid}), patterns become orthogonal in the infinite volume limit in such a way that the system can retrieve patterns only sequentially (see \cite{Agliari-PRL1,Agliari-PRL2,Monasson-PRL} for examples of neural networks able to perform in a parallel way).
\newline
An extensive mathematical formalization of these statements requires the introduction of several concepts and goes beyond the scope of the present paper: we refer to excellent textbooks for deepening the associative memory capabilities of the Hopfield network (see e.g. \cite{Amit,Coolen,Viktor}) and to further readings for understanding the mathematical complexity behind these models \cite{Guerra1,Guerra2,Bovier1,Bovier2,Tala1,Tala2,Agliari-Barattolo}.
\newline
An important remark is that, once introduced the {\em storage} of the network as the limiting ratio $\alpha= \lim_{N \to \infty} P/N$, we distinguish between two different regimes: the low-storage case, where $\alpha=0$, and the high-storage case, where $\alpha>0$. As for the latter, at a rigorous level, several questions still need to be answered (e.g., a proof of the existence of the infinite volume limit for its free energy at present is still lacking, not to mention a full, clear replica-symmetry-breaking picture for this type of {\em spin glass}), thus generalizations of the high-storage case may be still premature. In fact, the relativistic generalization of the Hopfield model was introduced as a low-storage model in \cite{Albert}, and here we provide a rigorous treatment of the generalization still at $\alpha = 0$.

\subsection{The {\em relativistic} Hopfield network in a nutshell}
As discussed and investigated in \cite{andrei,danis}, the analysis of the Hopfield model (and suitable extensions) can be mapped into a mechanical problem. More precisely, the free-energy associated to the model can be shown to fulfill an Hamilton-Jacobi equation describing a fictitious single-particle motion in a $P$-dimensional space, 
where the Mattis overlap plays as the particle velocity, which therefore displays an intrinsic bound ($m_{\mu} \leq 1, \forall \mu$). When studying the associative memory capabilities of the Hopfield model, one is particularly interested in regimes where the velocity is approaching its upper bound, in such a way that the correct framework to embed the problem is the relativist one and the related generalized Hopfield model is described hereafter.

Consider $N$ Boolean (i.e. Ising) spins (or {\em neurons}) $\sigma_i = \pm 1$, $i \in (1,...,N)$ and $P$ patterns $\xi^{\mu}$, $\ \mu \in (1,...,P)$ of length $N$ whose entries are independently and identically drawn from
$$
P(\xi_i^{\mu}=+1)=P(\xi_i^{\mu}=-1)=1/2.
$$
\begin{Definition}
The cost function (or {\em Hamiltonian} to keep a physical jargon) of the relativistic Hopfield model can be written as
\be\label{Hopfield-relativistico}
H_N^{\textrm{(r)}}(\boldsymbol \sigma| \boldsymbol \xi) := - N \sqrt{1+ \sum_{\mu=1}^{P}m_{\mu}^2} \equiv  - N \sqrt{1+ \bold{m}^2}.
\ee
\end{Definition}
Note that this cost function (\ref{Hopfield-relativistico}) can be expanded in an alternate-sign series as
\be\label{series}
\frac{H_N^{\textrm{(r)}}(\boldsymbol \sigma| \boldsymbol \xi)}{N} \sim -1 - \frac{1}{2N}\sum_{i,j}^{N,N}(\boldsymbol{\xi}_i\cdot \boldsymbol{\xi}_j)\sigma_i \sigma_j + \frac{1}{8 N^4} \sum_{i,j,k,l}^{N,N,N,N}(\boldsymbol{\xi}_i\cdot \boldsymbol{\xi}_j)(\boldsymbol{\xi}_k\cdot \boldsymbol{\xi}_l)\sigma_i \sigma_j \sigma_k \sigma_l + O(\sigma^6),
\ee
thus, focusing on the attractive contributions (beyond the classical pairwise model $P=2$), it is enriched by $P$-spin terms (with $P=6, 10, ...$) that yield to further synaptic couplings where information can be stored (as recently suggested by Hopfield himself \cite{Dimitry}), while, focusing on the repulsive contributions, it also displays $P$-spin terms (with $P=4, 8, ...$) that favour network's pruning (as suggested, in the past, by Hopfield himself and several other authors \cite{FAB,HopfieldUnlearning,unlearning0,unlearning1,VanHemmen} to erase spurious states).
\newline
The analysis of the information processing skills of this network has been accomplished elsewhere \cite{Albert,Complexity}: we summarize it by Fig.~$1$, referring to the original papers for further algorithmic details, while hereafter we deepen the mathematical aspects of its statistical mechanical foundation.

\begin{figure}
    \includegraphics[width=400pt]{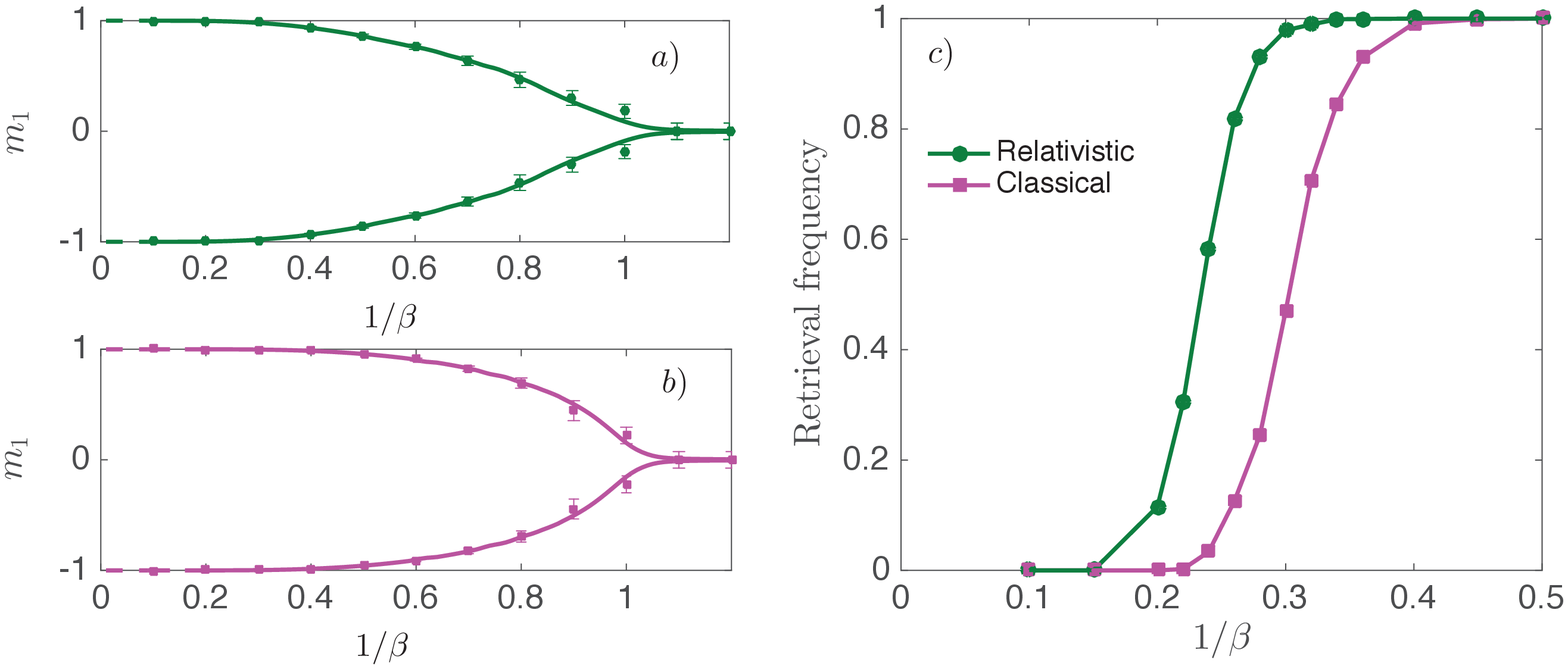}
      \caption{Left: Mattis overlap $m_1$ related to the retrieved patterns versus the noise level $1/\beta$ for the relativistic (panel $a$) and classic (panel $b$) Hopfield model. Symbols represent Monte Carlo runs for a network made of $N=400$ neurons and $P=3$ patterns while lines represent the numerical solution of the self-consistencies, respectively eq.~(\ref{self}) and eq.~(\ref{self-clax}). Right (panel $c$): Numerical comparison of the {\em retrieval frequency}, defined as the fraction of the runs that (randomly initiating the network dynamics) ended up into a pure state attractor, versus the noise level in the network for the relativistic ($\bigcirc$) and classic ($\square$) Hopfield model (again, $N=400$, $P=3$). }\label{FiguraUno} 
\end{figure}

\subsection{The statistical mechanical setting}
Having defined a suitable Hamiltonian playing as a cost function (see (\ref{Hopfield-classico}) and (\ref{Hopfield-relativistico})), the next step in the statistical-mechanics approach requires the introduction of a parameter $\beta \in \mathbb{R}^+$ to account for the noise level in the system (i.e., the {\em inverse temperature} in physical jargon) in such a way that for $\beta \to 0$ the network evolution is completely random, while in the $\beta \to \infty$ limit its evolution becomes deterministic (and the Hamiltonian plays as a Lyapounov function \cite{Coolen}). Then, we can state the next
\begin{Definition}
Defined $\mathbb{E}$ as the average over the patterns, we introduce the partition function $Z_N^{\textrm{(c,r)}}(\beta, \boldsymbol \xi)$ and the intensive free energy\footnote{Actually, we are committing a small abuse of notation as the ``standard'' free energy $\tilde{\alpha}(\beta)$ is defined as $\tilde{\alpha}(\beta)=-\beta^{-1}\alpha(\beta)$, but, as from the mathematical side $\tilde{\alpha}(\beta)$ and $\alpha(\beta)$ are (a constant apart) equivalent, we will keep our choice as it makes calculations more transparent, obviously, without affecting the results.} $\alpha_N^{\textrm{(c,r)}}(\beta)$ as
\begin{eqnarray}
Z_N^{\textrm{(c,r)}}(\beta, \boldsymbol \xi) &=& \sum_{\{ \sigma \}}^{2^N} e^{ - \beta H_N^{\textrm{(c,r)}}(\boldsymbol \sigma| \boldsymbol \xi)},\\
\label{eq:defalfa}
\alpha_N^{\textrm{(c,r)}}(\beta) &=& \frac{1}{N}\mathbb{E}\ln Z_N^{\textrm{(c,r)}}(\beta,  \boldsymbol \xi).
\end{eqnarray}
\end{Definition}
\begin{Definition}
Note that $\exp [ - \beta H_N^{\textrm{(c,r)}}(\boldsymbol \sigma| \boldsymbol \xi) ]/ Z^{\textrm{(c,r)}}(\beta,  \boldsymbol \xi)$ defines a probability measure, whose average (Boltzmann average) we indicate with $\langle \cdot \rangle_{\textrm{(c,r)}}$, i.e.
\be
\langle f(\boldsymbol \sigma)\rangle_{\textrm{(c,r)}} = \frac{\sum_{\{\sigma\}}^{2^N}f( \boldsymbol \sigma)e^{-\beta H^{\textrm{(c,r)}}_N(\boldsymbol \sigma |\boldsymbol \xi)}}{Z_N^{\textrm{(c,r)}}(\beta,  \boldsymbol \xi)}.
\ee
\end{Definition}
\begin{Remark} Note that to highlight  when observables are evaluated in the thermodynamic limit, we omit their subscript $N$. For instance, for the free-energy, we write
\be\label{alpha-limTD}
\alpha^{\textrm{(c,r)}}(\beta) = \lim_{N \to \infty}\alpha_N^{\textrm{(c,r)}}(\beta).
\ee
\end{Remark}
For the sake of completeness, we recall that for the standard Hopfield model in the infinite volume limit the free energy and the related self-consistency equations for the $P$ Mattis overlaps respectively read as \cite{Amit,Coolen,Viktor}
\begin{eqnarray}\label{free-clax}
\alpha^{\textrm{(c)}}(\beta)&=&\ln 2 + \left \langle \ln \cosh \left(\beta \boldsymbol{\xi}\cdot \langle \bold{m}\rangle_{\textrm{(c)}} \right) \right \rangle_{\boldsymbol \xi}  + \frac{\beta}{2} \langle \bold{m}^2 \rangle_{\textrm{(c)}},\\ \label{self-clax}
\langle m_{\mu}\rangle_{\textrm{(c)}} &=& \langle \xi^{\mu} \tanh\left( \beta  \boldsymbol{\xi}\cdot \langle \bold{m} \rangle_{\textrm{(c)}} \right)\rangle_{\boldsymbol \xi},
\end{eqnarray}
with the abbreviation $\langle f(\boldsymbol{\xi}) \rangle_{\boldsymbol \xi} = N^{-1} \sum_i f(\boldsymbol{\xi}_i)$.
\newline
In the following, we will focus solely on the relativistic generalization thus, in order to lighten the notation, we will drop the super- and sub-scripts $(r)$.

\section{Existence of the thermodynamic limit of the free energy} \label{sec:2}

In this section we prove the existence of $\alpha(\beta)$ for the relativistic Hopfield model. The underlying idea is to adapt the Guerra-Toninelli scheme, originally developed for a quadratic cost-function, to the Hamiltonian (\ref{Hopfield-relativistico}) featuring a square root. Following the standard scheme, we consider a system made of $N$ neurons and two other, independent, systems made of $N_1$ and $N_2$ neurons such that $N=N_1+N_2$. Then, we need to prove that the extensive free energy of the former is strictly smaller than the sum of those pertaining to the two subsystems (hence we have its sub-additivity), so that applying the Fekete Lemma \cite{fekete} we get the result \cite{Guerra-LimTerm1}. A way to compare the free energies of these two limiting cases is by interpolation, namely, defining an interpolating free energy whose extrema reproduce the free energies in the two cases of interest and then showing  that it derivative -w.r.t. the interpolating parameter- has semi-definite sign. Here the main adaptation will consist in a proof by reduction to absurd assuming the free energy to be super-additive (see Appendix A for the definitions of sub- super-additive successions and functions, and for the main statement of Fekete Lemma).

For the sake of simplicity (and without loss of generality) we fix $\beta =1$ just in this Section.
\newline
Let us consider, beyond the Mattis overlaps related to the $P$ patterns of the original $N$-neuron's model, also two further Mattis overlaps related to the two aforementioned smaller systems made of, respectively, $N_1$ and $N_2$ neurons. We can write
\begin{equation*}
\mathbf{m}_{N_{1}}:=\frac{1}{N_{1}}\sum_{i=1}^{N_{1}}\mathbf{\xi }_{i}\sigma
_{i},\quad \mathbf{m}_{N_{2}}:=\frac{1}{N_{2}}\sum_{i=1}^{N_{2}}\mathbf{\xi }%
_{i}\sigma _{i}
\end{equation*}%
and notice that
\begin{eqnarray*}
\mathbf{m}_{N} &=&\frac{1}{N}\sum_{i=1}^{N}\mathbf{\xi }_{i}\sigma _{i}=%
\frac{1}{N}\sum_{i=1}^{N_{1}+N_{2}}\mathbf{\xi }_{i}\sigma _{i}=\frac{1}{N}%
\left( \sum_{i=1}^{N_{1}}\mathbf{\xi }_{i}\sigma _{i}+\sum_{j=1}^{N_{2}}%
\mathbf{\xi }_{j}\sigma _{j}\right) \\
&=&\frac{1}{N}\left( N_{1}\mathbf{m}_{N_{1}}+N_{2}\mathbf{m}_{N_{2}}\right)
=\rho _{1}\mathbf{m}_{N_{1}}+\rho _{2}\mathbf{m}_{N_{2}},
\end{eqnarray*}%
once introduced the relative densities $\rho_1, \rho_2$ as
\begin{equation*}
\rho _{1}:=\frac{N_{1}}{N}\quad \text{and\quad }\rho _{2}:=\frac{N_{2}}{N}.
\end{equation*}

\begin{Definition}
Let us introduce an interpolating parameter $t \in \left[0,1\right]$ that we use to define an interpolating free energy $\alpha_N(\beta,t)$ as follows
\begin{equation}\label{alpha-inter}
\alpha_N \left( \beta ,t\right) =\frac{1}{N} \mathbb{E}\ln \sum_{\left\{ \sigma \right\}
}^{2^{N}}\exp \left[ tN\sqrt{1+\mathbf{m}_{N}^{2}}+\left( 1-t\right) \left(
N_{1}\sqrt{1+\mathbf{m}_{N_{1}}^{2}}+N_{2}\sqrt{1+\mathbf{m}_{N_{2}}^{2}}%
\right) \right].
\end{equation}%
\end{Definition}
It is crucial to observe that in the two limits of $t \to 1$ and $t \to 0$ we recover, respectively,
\begin{equation}
\alpha_N \left( \beta ,1\right) =\frac{1}{N}  \mathbb{E}\ln \sum_{\left\{ \sigma \right\}
}^{2^{N}}\exp \left( N\sqrt{1+\mathbf{m}_{N}^{2}}\right) =\alpha _{N}\left(
\beta \right) ,  \label{34}
\end{equation}
while
\begin{eqnarray}
\alpha_N \left( \beta ,0\right) &=&\frac{1}{N}  \mathbb{E}\ln \sum_{\left\{ \sigma
\right\} }^{2^{N}}\exp \left( N_{1}\sqrt{1+\mathbf{m}_{N_{1}}^{2}}+N_{2}%
\sqrt{1+\mathbf{m}_{N_{2}}^{2}}\right)  \notag \\
&=&\frac{1}{N}  \mathbb{E}\ln \left( Z_{N_{1}}Z_{N_{2}}\right) =\frac{1}{N}\mathbb{E}\left( \ln
Z_{N_{1}}+\ln Z_{N_{2}}\right)  \notag \\
&=&\rho _{1}\alpha _{N_{1}}\left( \beta \right) +\rho _{2}\alpha
_{N_{2}}\left( \beta \right) .  \label{35}
\end{eqnarray}%
Therefore, by varying $t$, we interpolate between the original system and the sum of the two smaller subsystems, properly weighted by their relative densities. By the Fundamental Theorem of Calculus we can write
\begin{equation}
\alpha_N \left( \beta ,1\right) =\alpha_N \left( \beta ,0\right) +\int_{0}^{1}%
\frac{d \alpha_N \left( \beta ,t\right) }{d t}dt,  \label{33}
\end{equation}%
and, since the {\em integral} operator is monotonous (and therefore it must respect inequalities), it will be sufficient to prove that the derivative of the interpolating free energy w.r.t. $t$, i.e., $\dot{\alpha}_N \left( \beta ,t\right) \equiv d_t \alpha_N(\beta,t)$, has a negative semi-definite sign.
\newline
Denoting with $\tilde{Z}$ the interpolating partition function coupled to the interpolating free energy (\ref{alpha-inter}) we can write
\begin{eqnarray}
\frac{d \alpha_N \left( \beta ,t\right) }{d t} &=&\frac{\partial
}{\partial t}\frac{1}{N} \mathbb{E}\ln \sum_{\left \{ \sigma \right\} }^{2^{N}}\exp
\left [ tN\sqrt{1+\mathbf{m}_{N}^{2}}+\left( 1-t\right) \left ( N_{1}\sqrt{1+%
\mathbf{m}_{N_{1}}^{2}}+N_{2}\sqrt{1+\mathbf{m}_{N_{2}}^{2}}\right) \right]
\notag \\
&=&\frac{1}{N}\sum_{\{ \sigma \}}^{2^N} \mathbb{E} \frac{1}{\tilde{Z}}\left( N\sqrt{1+\mathbf{m}_{N}^{2}}-N_{1}%
\sqrt{1+\mathbf{m}_{N_{1}}^{2}}+N_{2}\sqrt{1+\mathbf{m}_{N_{2}}^{2}}\right)
\exp \left ( \phi \right ) ,  \label{30}
\end{eqnarray}%
where we called
\begin{equation*}
\phi :=tN\sqrt{1+\mathbf{m}_{N}^{2}}+\left( 1-t\right) \left ( N_{1}\sqrt{1+%
\mathbf{m}_{N_{1}}^{2}}+N_{2}\sqrt{1+\mathbf{m}_{N_{2}}^{2}}\right ) .
\end{equation*}%
\begin{Remark}
We observe that eq. (\ref{alpha-inter}) allows us to generalize the partition function $Z(\beta, \boldsymbol \xi)$ to $Z(\beta,\boldsymbol \xi, t)$, such that eq. (\ref{30}) can be written in terms of its $t$-generalized Boltzmann average $\langle . \rangle_t$  as
\be
\frac{d \alpha_N \left( \beta ,t\right) }{d t}  = \left\langle \sqrt{1+\mathbf{m}_{N}^{2}}-\rho _{1}\sqrt{1+\mathbf{m}%
_{N_{1}}^{2}}-\rho _{2}\sqrt{1+\mathbf{m}_{N_{2}}^{2}}\right\rangle _{t}.
\label{32}
\ee
\end{Remark}
Now we must prove that the expression averaged in the r.h.s. of eq. (\ref{32}) is less or equal to zero, as stated by the next
\begin{Proposition}
\label{proposizioneModelloRelativisticoSubadditivo} The $t$-derivative of the interpolating free energy (\ref{alpha-inter}) is semi-definite negative, namely
\begin{equation}
\label{aste}
\sqrt{1+\mathbf{m}_{N}^{2}}-\rho _{1}\sqrt{1+\mathbf{m}_{N_{1}}^{2}}-\rho
_{2}\sqrt{1+\mathbf{m}_{N_{2}}^{2}}\leq 0.
\end{equation}
\end{Proposition}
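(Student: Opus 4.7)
The inequality~(\ref{aste}) is a deterministic, pointwise statement about real vectors $\mathbf{m}_{N_{1}},\mathbf{m}_{N_{2}}\in\mathbb{R}^{P}$: no expectation over $\boldsymbol{\xi}$ or Boltzmann average is involved, because the relation $\mathbf{m}_{N}=\rho_{1}\mathbf{m}_{N_{1}}+\rho_{2}\mathbf{m}_{N_{2}}$ with $\rho_{1}+\rho_{2}=1$ has already been established in the excerpt. So the plan is simply to recognize (\ref{aste}) as a convexity/Jensen-type inequality for the function $f(\mathbf{x}):=\sqrt{1+|\mathbf{x}|^{2}}$ on $\mathbb{R}^{P}$, applied to the convex combination $\mathbf{m}_{N}=\rho_{1}\mathbf{m}_{N_{1}}+\rho_{2}\mathbf{m}_{N_{2}}$.

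The cleanest route, which avoids any Hessian computation, is to read $f(\mathbf{x})$ as the Euclidean norm of the lifted vector $(1,\mathbf{x})\in\mathbb{R}^{P+1}$. Since $\rho_{1}+\rho_{2}=1$ one has the identity
\begin{equation*}
\bigl(1,\mathbf{m}_{N}\bigr)=\bigl(\rho_{1}+\rho_{2},\;\rho_{1}\mathbf{m}_{N_{1}}+\rho_{2}\mathbf{m}_{N_{2}}\bigr)=\rho_{1}\bigl(1,\mathbf{m}_{N_{1}}\bigr)+\rho_{2}\bigl(1,\mathbf{m}_{N_{2}}\bigr),
\end{equation*}
so the Minkowski (triangle) inequality in $\mathbb{R}^{P+1}$, together with positive homogeneity of the norm and $\rho_{1},\rho_{2}\geq 0$, gives
\begin{equation*}
\sqrt{1+\mathbf{m}_{N}^{2}}=\bigl\|(1,\mathbf{m}_{N})\bigr\|\leq \rho_{1}\bigl\|(1,\mathbf{m}_{N_{1}})\bigr\|+\rho_{2}\bigl\|(1,\mathbf{m}_{N_{2}})\bigr\|=\rho_{1}\sqrt{1+\mathbf{m}_{N_{1}}^{2}}+\rho_{2}\sqrt{1+\mathbf{m}_{N_{2}}^{2}},
\end{equation*}
which is exactly (\ref{aste}). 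Equivalently, one may first verify that $f$ is convex on $\mathbb{R}^{P}$ by computing $\nabla^{2}f=(1+|\mathbf{x}|^{2})^{-1/2}I-(1+|\mathbf{x}|^{2})^{-3/2}\mathbf{x}\mathbf{x}^{\top}$ and checking, via Cauchy--Schwarz applied to the quadratic form $\mathbf{v}^{\top}\nabla^{2}f\,\mathbf{v}$, that it is positive semi-definite; then Jensen's inequality yields the same conclusion. Either way, no case analysis on the signs or magnitudes of the overlaps is required.

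Once (\ref{aste}) is in hand, one inserts it into (\ref{32}) so that $\dot{\alpha}_{N}(\beta,t)\leq 0$ for every $t\in[0,1]$, integrates from $0$ to $1$ using (\ref{33}), and combines with (\ref{34})--(\ref{35}) to obtain $N\alpha_{N}(\beta)\leq N_{1}\alpha_{N_{1}}(\beta)+N_{2}\alpha_{N_{2}}(\beta)$, i.e.\ sub-additivity of $N\mapsto N\alpha_{N}(\beta)$; Fekete's lemma then gives the existence of the limit $\alpha(\beta)$ announced in (\ref{alpha-limTD}).

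There is no genuine obstacle here: the relativistic square root, which at first sight seems an obstruction to the Guerra--Toninelli quadratic scheme, actually simplifies the analysis, because $\sqrt{1+|\mathbf{x}|^{2}}$ is convex whereas the quadratic $|\mathbf{x}|^{2}$ in the classical Hopfield cost function is convex in the ``wrong'' direction (it is super-additive under splitting, which is why the classical proof proceeds via the explicit double-sum cross terms). The only point to be slightly careful about is the remark made in the text about a \emph{reductio ad absurdum}: since the inequality (\ref{aste}) is in fact direct, the absurd-style phrasing can be replaced by the straightforward Minkowski argument above, and the conclusion that the sequence $\{N\alpha_{N}(\beta)\}$ is sub-additive follows at once.
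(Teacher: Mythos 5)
Your proof is correct, but it follows a genuinely different route from the one the paper adopts for Proposition~\ref{proposizioneModelloRelativisticoSubadditivo}. The paper argues by contradiction: it assumes the reverse strict inequality, isolates the cross terms, squares twice (checking first that $1+\mathbf{m}_{N_1}\mathbf{m}_{N_2}\geq 0$ so that squaring preserves the inequality), and lands on $\left(\mathbf{m}_{N_2}^2-\mathbf{m}_{N_1}^2\right)^2<0$, which is absurd. You instead lift each overlap to the vector $(1,\mathbf{m})\in\mathbb{R}^{P+1}$, observe that $\rho_1+\rho_2=1$ makes $(1,\mathbf{m}_N)$ an exact convex combination of $(1,\mathbf{m}_{N_1})$ and $(1,\mathbf{m}_{N_2})$, and invoke Minkowski's inequality. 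Your argument is shorter, needs no sign analysis before squaring, and --- importantly --- is written so that it is manifestly valid for the vector-valued overlaps $\mathbf{m}\in\mathbb{R}^P$, whereas the paper's algebra is carried out in notation that reads as scalar (the products $\mathbf{m}_{N_1}\mathbf{m}_{N_2}$ would need to be interpreted as dot products and the final contradiction would then require an extra appeal to Cauchy--Schwarz). Note that the paper itself records your observation in the Remark immediately following its proof, where the convexity of $x\mapsto\sqrt{1+x^2}$ is stated and the inequality is rederived by Jensen; so your route is essentially the paper's own ``second proof'' promoted to the main one. One small caveat on a side comment of yours: in the classical Curie--Weiss/Hopfield setting the convexity of $x\mapsto x^2$ gives $\mathbf{m}_N^2\leq\rho_1\mathbf{m}_{N_1}^2+\rho_2\mathbf{m}_{N_2}^2$ in exactly the same direction, so the quadratic case is not really ``convex in the wrong direction''; this aside is inessential and does not affect the validity of your argument.
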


\begin{proof}
We first observe that, as $\rho _{2}=1-\rho _{1}=1-\rho$, the parameters effectively involved in eq. (\ref{32}) are solely $\mathbf{m}_{N_{1}}$, $\mathbf{m}_{N_{2}}$ and $\rho _{1}$, that from now on we will call simply $\rho$, in such a way that $ \mathbf{m}_{N}=\rho _{1}\mathbf{m}_{N_{1}}+\rho _{2}\mathbf{m}_{N_{2}}=\rho
\mathbf{m}_{N_{1}}+\left( 1-\rho \right) \mathbf{m}_{N_{2}}$.
Then, we can write the l.h.s. of eq. (\ref{aste}) as $
\sqrt{1+\left( \rho \mathbf{m}_{N_{1}}+\left( 1-\rho \right) \mathbf{m}%
_{N_{2}}\right) ^{2}}-\rho \sqrt{1+\mathbf{m}_{N_{1}}^{2}}-\left( 1-\rho
\right) \sqrt{1+\mathbf{m}_{N_{2}}^{2}}$.
Let us now suppose, by contradiction, that
\begin{equation*}
\sqrt{1+\left( \rho \mathbf{m}_{N_{1}}+\left( 1-\rho \right) \mathbf{m}%
_{N_{2}}\right) ^{2}}-\rho \sqrt{1+\mathbf{m}_{N_{1}}^{2}}-\left( 1-\rho
\right) \sqrt{1+\mathbf{m}_{N_{2}}^{2}}>0.
\end{equation*}%
With some algebra we obtain
\begin{equation*}
2\rho \left( 1-\rho \right) >2\rho \left( 1-\rho \right) \left( \sqrt{\left(
1+\mathbf{m}_{N_{1}}^{2}\right) \left( 1+\mathbf{m}_{N_{2}}^{2}\right) }-%
\mathbf{m}_{N_{1}}\mathbf{m}_{N_{2}}\right),
\end{equation*}%
and, as $2\rho \left( 1-\rho \right) \neq 0$, we have
$
1>\sqrt{\left( 1+\mathbf{m}_{N_{1}}^{2}\right) \left( 1+\mathbf{m}%
_{N_{2}}^{2}\right) }-\mathbf{m}_{N_{1}}\mathbf{m}_{N_{2}},
$
whence
\begin{equation*}
\sqrt{\left( 1+\mathbf{m}_{N_{1}}^{2}\right) \left( 1+\mathbf{m}%
_{N_{2}}^{2}\right) }<1+\mathbf{m}_{N_{1}}\mathbf{m}_{N_{2}}.
\end{equation*}%
The r.h.s. term of the above expression is certainly non negative
(as $\mathbf{m}_{N_{1}},\mathbf{m}_{N_{2}}\in \left[
-1,1\right]$): even in the worst scenario where $\mathbf{m}_{N_{1}}$ and $\mathbf{m}%
_{N_{2}}$ have opposite signs, their product will never be smaller than $-1$ hence, by quadrature the inequality remains unchanged.
We are then allowed to write
\begin{equation*}
\left( 1+\mathbf{m}_{N_{1}}^{2}\right) \left( 1+\mathbf{m}%
_{N_{2}}^{2}\right) <\left( 1+\mathbf{m}_{N_{1}}\mathbf{m}_{N_{2}}\right)
^{2}
\end{equation*}%
by which we get
\begin{equation*}
\left( \mathbf{m}_{N_{2}}^{2}-\mathbf{m}_{N_{1}}^{2}\right) ^{2}<0
\end{equation*}%
that is obviously wrong.\hfill
\end{proof}
\begin{Remark}
We also highlight that the function $f:x\mapsto \sqrt{1+x^{2}}$ is convex, namely, for $\lambda \in \left[ 0,1\right]$
\begin{equation*}
f\left( \lambda x_{1}+\left( 1-\lambda \right) x_{2}\right) \leq \lambda
f\left( x_{1}\right) +\left( 1-\lambda \right) f\left( x_{2}\right) .
\end{equation*}%
If we identify%
\begin{equation*}
\lambda :=\rho, \ \quad x_{1}=\mathbf{m}_{N_{1}}, \ \quad x_{2}=\mathbf{m}_{N_{2}},
\end{equation*}%
as $\mathbf{m}_{N}=\rho \mathbf{m}_{N_{1}}+ (1-\rho) \mathbf{m}_{N_{2}}$, then
\begin{equation*}
\sqrt{1+\mathbf{m}_{N}^{2}}=\sqrt{1+\left( \rho _{1}\mathbf{m}_{N_{1}}+\rho
_{2}\mathbf{m}_{N_{2}}\right) ^{2}}\leq \rho _{1}\sqrt{1+\mathbf{m}%
_{N_{1}}^{2}}+\rho _{2}\sqrt{1+\mathbf{m}_{N_{2}}^{2}}.
\end{equation*}
\end{Remark}

Proposition (\ref{proposizioneModelloRelativisticoSubadditivo})  allows us to state that
\begin{equation*}
\alpha_N \left( \beta ,1\right) -\alpha_N \left( \beta ,0\right) =\int_{0}^{1}%
\frac{\partial \alpha_N \left( \beta ,t\right) }{\partial t}dt\leq 0
\end{equation*}
namely
\begin{equation}
N \alpha_{N}\left( \beta \right) \leq N _{1}\alpha _{N_{1}}\left( \beta
\right) + N_{2}\alpha _{N_{2}}\left( \beta \right),  \label{36}
\end{equation}%
such that we can finally state the next
\begin{Theorem}
The infinite volume limit of the intensive free energy defined by the relativistic Hopfield cost function (\ref{Hopfield-relativistico}) exists and it equals its infimum, that is
\begin{equation*}
\exists \lim_{N\rightarrow \infty }\alpha _{N}\left( \beta \right)
=\inf_{N\in \mathbb{N}}\left\{ \alpha _{N}\left( \beta \right) \right\}
=\alpha \left( \beta \right) .
\end{equation*}
\end{Theorem}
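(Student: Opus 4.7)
The plan is to show that the sequence $a_N := N\,\alpha_N(\beta)$ is sub-additive and that $\alpha_N(\beta)$ is uniformly bounded from below, so that Fekete's lemma (recalled in Appendix A) applies and delivers existence of the limit together with the identification with the infimum.

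For sub-additivity the heavy lifting has already been done. Proposition~\ref{proposizioneModelloRelativisticoSubadditivo} shows that the integrand appearing in (\ref{aste}) is pointwise non-positive on every spin configuration and every realisation of the disorder. By Remark~2, the $t$-derivative of the interpolating free energy (\ref{alpha-inter}) is the $t$-generalized Boltzmann average of exactly that quantity; since both the Boltzmann average and the disorder expectation preserve the sign, we obtain $\dot\alpha_N(\beta,t)\leq 0$ for all $t\in[0,1]$. Integrating (\ref{33}) and using the boundary identifications (\ref{34}) and (\ref{35}) yields $\alpha_N(\beta,1)-\alpha_N(\beta,0)\leq 0$, which, multiplied by $N$ and rewritten in terms of $N_1,N_2$, is precisely (\ref{36}): the sequence $a_N=N\,\alpha_N(\beta)$ is sub-additive.

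To apply Fekete I still need $\alpha_N(\beta)$ bounded from below uniformly in $N$. This is immediate from the form of the Hamiltonian: since $\sqrt{1+\mathbf{m}_N^2}\geq 1$ for every $\boldsymbol\sigma$ and every $\boldsymbol\xi$, one has $-\beta H_N(\boldsymbol\sigma|\boldsymbol\xi)\geq \beta N$ pointwise, so
\begin{equation*}
Z_N(\beta,\boldsymbol\xi)\geq 2^{N}\,e^{\beta N}
\quad\Longrightarrow\quad
\alpha_N(\beta)\geq \ln 2+\beta.
\end{equation*}
A matching upper bound (which is not strictly needed but is useful as a sanity check in the low-storage regime) follows from $\mathbf{m}_N^2\leq P$, giving $\alpha_N(\beta)\leq \ln 2+\beta\sqrt{1+P}$.

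With sub-additivity of $\{a_N\}$ and the uniform lower bound $a_N/N\geq \ln 2+\beta$, Fekete's lemma gives existence of $\lim_{N\to\infty}a_N/N$ and its equality with $\inf_N a_N/N$, which is exactly the statement of the theorem. There is no real obstacle left at this stage: the only non-trivial step was the pointwise inequality proved by contradiction in Proposition~\ref{proposizioneModelloRelativisticoSubadditivo} (equivalently, the convexity of $x\mapsto\sqrt{1+x^2}$ pointed out in the subsequent Remark); everything else is a routine packaging of the Guerra--Toninelli interpolation into the Fekete framework.
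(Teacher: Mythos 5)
Your proof is correct and follows essentially the same route as the paper: the Guerra--Toninelli interpolation, the pointwise inequality of Proposition~\ref{proposizioneModelloRelativisticoSubadditivo} giving $\dot\alpha_N(\beta,t)\leq 0$, integration to obtain the sub-additivity (\ref{36}), and Fekete's lemma. The only addition is your uniform lower bound $\alpha_N(\beta)\geq\ln 2+\beta$, which the paper leaves implicit but which is a worthwhile check that the infimum is finite.
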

\begin{proof}
The proof is a straightforward consequence of Proposition $1$ coupled to the Fekete Lemma. \hfill
\end{proof}

\section{Expression of the thermodynamic limit of the free energy} \label{sec:3}

In this Section we give an explicit expression for the infinite volume limit of the free energy related to the relativistic Hopfield cost-function (\ref{Hopfield-relativistico}) in terms of the Mattis overlaps. Again, we use a suitable Guerra's interpolation scheme, in such a way that one extremum of the interpolation recovers the original model to be solved and the other recovers a case whose solution is straightforward. In particular, as we know that all the mean-field models (even the generalized ones) have a product space structure, namely their probability distribution $P(\sigma_1,\sigma_2,...,\sigma_N)$ in the thermodynamic limit factorizes (i.e., $\lim_{N \to \infty} P(\sigma_1,\sigma_2,...,\sigma_N) = \prod_{i=1}^{N}P(\sigma_i)$), we can use this information to construct the ``easy'' extremum of the interpolation. In other words,
\begin{Definition}
Being the scalar $t \in [0,1]$ an interpolating parameter, we define a novel interpolating free energy $\alpha_N(\beta,t)$ as
\be\label{alpha-intes}
\alpha_N(\beta,t) := \frac{1}{N} \mathbb{E} \ln   \sum_{ \{ \sigma \}}^{2^N}
\exp\left( t \beta N \sqrt{1 + \sum_{\mu=1}^{P}m_{\mu}^2} + (1-t) \beta \sum_{\mu=1}^{P} \psi^{\mu} \sum_{i=1}^{N} \xi_i^{\mu}\sigma_i \right),
\ee
where $\psi^{\mu}$, $\mu \in (1,...,P)$, are $P$ fields that are functions of the neurons and of the patterns.
\end{Definition}
It is worth stressing that each neuron experiences the simultaneous action of $P$ fields $\{ \psi^{\mu}\}_{\mu=1,...,P}$ linearly combined, unlike the interpolating structure working in the ferromagnetic case where each spin just experiences the action of one single field \cite{Barra0}.
The particular choice of the fields $\psi^{\mu}$ will be discussed later.
\begin{Remark}
Given a smooth function $F(\boldsymbol \sigma|\boldsymbol \xi)$ of the neurons and of the patterns, we observe that the interpolating free-energy (\ref{alpha-intes}) implicitly defines the extended averages $\langle F(\boldsymbol \sigma| \boldsymbol\xi) \rangle_t$ as
\be\label{extended-average}
\langle F(\boldsymbol \sigma| \boldsymbol \xi) \rangle_t := \frac{\sum_{\{ \sigma \}}^{2^N} F(\boldsymbol \sigma| \boldsymbol \xi) \exp\left( t \beta N \sqrt{1 + \sum_{\mu=1}^{P}m_{\mu}^2} + (1-t) \beta \sum_{\mu=1}^{P} \psi^{\mu} \sum_{i=1}^{N} \xi_i^{\mu}\sigma_i \right) }{\sum_{\{ \sigma\}}^{2^N}\exp\left( t \beta N \sqrt{1 + \sum_{\mu=1}^{P}m_{\mu}^2} + (1-t) \beta \sum_{\mu}^{P} \psi^{\mu} \sum_{i=1}^{N} \xi_i^{\mu}\sigma_i \right)}.
\ee
\end{Remark}
We now must identify the extrema of $\alpha_N(\beta,t)$: at $t=1$ we get the original model we aim to solve, while at $t=0$ we get a system with one-body terms. The latter can be handled easily and its contribution to the free energy reads out as
\be\label{Cauchy}
\alpha_N(\beta,t=0) = \ln 2 + \left \langle \ln\cosh\left( \beta \sum_{\mu=1}^{P}\psi^{\mu} \langle m_{\mu}\rangle \right) \right \rangle_{\boldsymbol{\xi}}.
\ee
To obtain the free energy of the model $\alpha_N(\beta)=\alpha_N(\beta,t=1)$,  we use again the Fundamental Theorem of Calculus and write
\be
\alpha(\beta)=\lim_{N \to \infty} \alpha_N(\beta,t=1) =\lim_{N \to \infty}\left( \alpha_N(\beta,t=0) + \int_{0}^{1}\dot{\alpha}_N(\beta,t')dt' \right),
\ee
where, as usual, $\dot{\alpha}_N(\beta,t) \equiv d_t \alpha_N(\beta,t)$ is the $t$-derivative of the interpolating free energy. The latter reads as
\be\label{punto}
\frac{d \alpha_N(\beta,t)}{dt} = \beta \left \langle \left( \sqrt{1+ \langle\bold{m}_N^2\rangle}- \sum_{\mu=1}^{P}\psi^{\mu}\langle m_{\mu}\rangle \right) \right \rangle_{\boldsymbol{\xi}}.
\ee
In general, its evaluation is an hard task, yet in the thermodynamic limit (where we are focusing), calling $M_{\mu}$ the limiting value of the $\mu$-th Mattis overlap (i.e., $\lim_{N \to \infty}P(\bold{m})=\delta(\bold{m}-\bold{M})$), by requiring $\beta$-almost-surely that
\begin{eqnarray}\label{selfaveragingmM}
&& \lim_{N \to \infty} \sum_{\mu=1}^{P} \left \langle \left( m_{\mu}-M_{\mu} \right)^2 \right \rangle =0,\\
&& \lim_{N \to \infty} \left \langle \left(\sqrt{1+ \bold{m^2}}-\sqrt{1+\bold{M}^2} \right)^2 \right \rangle =0,
\end{eqnarray}
namely the self-averaging of the order parameters  and of the energy of the model \cite{Barra0,Barra-Bipartiti,Guerra-SumRules}, as $\langle m_{\mu}\rangle \to M_{\mu}$ we obtain
$$
\left \langle\sqrt{1+\bold{m^2}} - \frac{\bold{m} \cdot \bold{M}}{\sqrt{1+\bold{M}^2}} + \frac{1}{\sqrt{1+\bold{M}^2}}  \right \rangle=0.
$$
Comparing the above equation with the r.h.s. of (\ref{punto}) and choosing $\psi^{\mu}=\frac{M_{\mu}}{\sqrt{1+\bold{M}^2}}$, we can write
$$
\frac{d \alpha(\beta,t)}{dt} - \frac{\beta}{\sqrt{1+\bold{M}^2}}=0.
$$
Merging the last result with the Cauchy condition (\ref{Cauchy}) we can finally state the next
\begin{Theorem}
The infinite volume limit of the free energy for the relativistic Hopfield cost function (\ref{Hopfield-relativistico}) reads, in terms of its $P$ Mattis overlaps, as
\be\label{free-final}
\alpha(\beta)=\ln 2 + \left \langle \ln \cosh \left(\beta \boldsymbol{\xi}\cdot \frac{ \bold{M}}{\sqrt{1+  \bold{M}^2 }} \right) \right \rangle_{\boldsymbol \xi} + \frac{\beta}{\sqrt{1+  \bold{M}^2}},
\ee
and its related self-consistency equation reads as
\be\label{self}
M_{\mu}  = \left \langle \xi^{\mu} \tanh\left( \beta \frac{\sum_{\mu=1}^P \xi^{\mu} M_{\mu}}{\sqrt{1+\sum_{\mu=1}^{P}M_{\mu}^2}} \right) \right \rangle_{\boldsymbol \xi}
\ee
in agreement with the previous results reported in \cite{Albert}.
\end{Theorem}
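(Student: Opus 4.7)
The plan is to implement a Guerra-type one-body interpolation: I introduce the interpolating free energy $\alpha_N(\beta,t)$ in (\ref{alpha-intes}), which coincides with the target $\alpha_N(\beta)$ at $t=1$ and, at $t=0$, collapses to a product over independent spins each subject to a linear combination of $P$ auxiliary fields $\psi^\mu$. The motivation for such a \emph{linear combination} of fields, rather than a single ferromagnetic-style one, is that in the relativistic Hopfield model each spin simultaneously senses the $P$ condensed Mattis directions, so the trial measure must reflect that geometry. The fields $\psi^\mu$ are left free and will be fixed a posteriori so as to reproduce the correct thermodynamics. The Fundamental Theorem of Calculus then yields
\[
\alpha(\beta)=\lim_{N\to\infty}\alpha_N(\beta,0)+\lim_{N\to\infty}\int_0^1\dot\alpha_N(\beta,t)\,dt.
\]

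I would treat the $t=0$ contribution by direct factorization of the sum over $\{\sigma\}$ into $N$ identical single-spin sums, producing (\ref{Cauchy}) with $\langle m_\mu\rangle$ playing the role of the effective magnetization carried by the $\mu$-th field. The bulk term $\dot\alpha_N(\beta,t)$ in (\ref{punto}) features the difference between the relativistic energy density $\sqrt{1+\mathbf{m}_N^2}$ and the linear proxy $\sum_\mu\psi^\mu m_\mu$, both evaluated in the $t$-deformed Boltzmann measure (\ref{extended-average}). To make this difference tractable I would invoke the self-averaging hypotheses (\ref{selfaveragingmM}): in the infinite volume limit the overlap and the square-root energy concentrate on deterministic limits $\mathbf{M}$ and $\sqrt{1+\mathbf{M}^2}$, so I may linearize $\sqrt{1+\mathbf{m}^2}$ around $\mathbf{M}$ and keep only the constant plus the first-order term $\mathbf{m}\cdot\mathbf{M}/\sqrt{1+\mathbf{M}^2}$, the remainder being a fluctuation that vanishes.

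The crucial step is the choice $\psi^\mu := M_\mu/\sqrt{1+\mathbf{M}^2}$, precisely tailored so that the linear-in-$\mathbf{m}$ piece of the expanded square root cancels exactly the $\sum_\mu\psi^\mu m_\mu$ term in $\dot\alpha_N(\beta,t)$. What survives is the constant $\beta/\sqrt{1+\mathbf{M}^2}$, independent of $t$, and integration over $[0,1]$ is trivial. Adding this to the $t=0$ boundary value (\ref{Cauchy}) with the same choice of $\psi^\mu$ produces the representation (\ref{free-final}). For the self-consistency (\ref{self}) I would extremize $\alpha(\beta)$ with respect to each $M_\mu$: using the identity
\[
\partial_{M_\mu}\Bigl(\tfrac{M_\nu}{\sqrt{1+\mathbf{M}^2}}\Bigr)=\tfrac{\delta_{\mu\nu}}{\sqrt{1+\mathbf{M}^2}}-\tfrac{M_\mu M_\nu}{(1+\mathbf{M}^2)^{3/2}},
\]
the contributions coming from the $\log\cosh$ term and from $\beta/\sqrt{1+\mathbf{M}^2}$ conspire so that the prefactors depending on the other $M_\nu$'s simplify, leaving exactly (\ref{self}).

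The main obstacle I anticipate is the rigorous justification of the self-averaging identities (\ref{selfaveragingmM}), which must hold not just at $t=1$ but uniformly along the interpolation. For the linear overlap this is a standard low-storage fact; for the nonlinear observable $\sqrt{1+\mathbf{m}^2}$ it requires a separate argument exploiting the Lipschitz continuity of $x\mapsto\sqrt{1+x^2}$ together with concentration of $\mathbf{m}$. A secondary subtlety is that $\psi^\mu$ is defined through the yet-unknown order parameter $\mathbf{M}$: this is the usual fixed-point feature of Guerra's scheme and can be made honest by carrying out the interpolation for generic $\psi^\mu$ first and only imposing the relation $\psi^\mu=M_\mu/\sqrt{1+\mathbf{M}^2}$ at the extremization stage.
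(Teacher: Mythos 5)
Your proposal follows essentially the same route as the paper: the same one-body interpolation (\ref{alpha-intes}) with $P$ linearly combined fields, the same Cauchy condition (\ref{Cauchy}), the same linearization of $\sqrt{1+\mathbf{m}^2}$ around $\mathbf{M}$ justified by the self-averaging hypotheses (\ref{selfaveragingmM}), the same choice $\psi^{\mu}=M_{\mu}/\sqrt{1+\mathbf{M}^2}$ rendering the $t$-derivative constant, and the same extremization to obtain (\ref{self}). Your closing remarks on the need for self-averaging to hold uniformly in $t$ and on the fixed-point character of the field choice are sensible refinements of points the paper treats only implicitly, but they do not change the argument.
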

\begin{Remark}
We stress that the leading order expansion of the above free energy (\ref{free-final}) and self-consistency (\ref{self}) correctly recovers the standard Hopfield picture coded by eq.s (\ref{free-clax}) and (\ref{self-clax}).
\end{Remark}
We can now move on to study the critical behavior of the system. To this task it will be convenient to introduce the following
\begin{Definition}
Taking, without loss of generality, $\boldsymbol{\xi}^1$ as the candidate pattern to be retrieved, we introduce its centered and rescaled Mattis overlap as
\be\label{riscalati}
\hat{m}_1 := \sqrt{N}\left(m_1 - M_1 \right),
\ee
where, as usual,  $m_{1} := N^{-1}\sum_{i=1}^{N}\xi_i^{1}\sigma_i$ while $M_{1}$ is its thermodynamic limit, namely $\lim_{N\to\infty}\langle m_{1}\rangle \to M_{1}$.
\end{Definition}
Notice that $\langle \hat{m}_1^2 \rangle$ scales as the variance of $\langle m_1 \rangle$ times the system size $N$. In this way, we can investigate the critical behaviour of the system by studying $\langle \hat{m}_1^2 \rangle$ as a function of the noise $\beta$: we look for those values $\beta_c$ where the fluctuations of the order parameters diverge as this is a signature of criticality.
\newline
In order to get an explicit expression for $\langle \hat{m_1^2}\rangle$, we exploit again the interpolation scheme (\ref{alpha-intes}) and we write $\langle \hat{m}_1^2 \rangle \equiv \langle \hat{m}_1^2\rangle_{t=1}$ as
\be
\langle \hat{m}_1^2\rangle_{t=1} = \langle \hat{m}_1^2\rangle_{t=0} + \int_{0}^{1} \frac{d \langle \hat{m}_1^2\rangle_{t'}}{dt'}dt,
\ee
where the generalized average $\langle \cdot \rangle_t$ was defined in (\ref{extended-average}).\\
Note that our calculation start at $t=0$ where the system is decoupled: it is thus both natural and convenient to approach the critical line from the high noise region as in this region it is possible to use standard central-limit-theorem-like arguments to assume the probability distribution of the $\hat{m}_1$ to be a Gaussian\footnote{This assumption crucially allows us to use the Wick Theorem, namely, given a Gaussian variable $z$ we can write $\mathbb{E}(z \cdot f(z))=\mathbb{E}(z^2)\mathbb{E}(\partial_z f(z))$.}: the Cauchy condition $\langle \hat{m}_1^2 \rangle_{t=0}$  (that is the standard {\em high-temperature} value) can be easily shown to be $\langle \hat{m}_1^2 \rangle_{t=0}=1$ as $\langle \hat{m}_1^2 \rangle_{t=0} = \langle N \left( m_{1} - M_{1} \right)^2 \rangle_{t=0} = 1 + (N-1) M_1^2 + N M_1^2 - 2N M_1^2=1-M_1^2=1$ as in the ergodic region trivially $M_1^2=0$).
We must now face the $t$-derivative: to this task it is useful to state the next
\begin{Proposition}
Retain $\psi_{\mu}=M_{\mu}/(\sqrt{1+\sum_{\mu=1}^{P} M_{\mu}^2})$ and let $F$ be a smooth function of the Mattis overlaps, then, above and close to the critical point (namely where the Mattis overlaps are zero or infinitesimal) the following streaming equation holds: 
\be\label{streaming} 
\frac{d}{dt} \langle F \rangle_t \sim \frac{\beta}{2}\left( \langle F \bold{\hat{m}}^2 \rangle_t - \langle F \rangle \langle \bold{\hat{m}}^2 \rangle_t \right).
\ee
\end{Proposition}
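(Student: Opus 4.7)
The plan is to compute $\frac{d}{dt}\langle F\rangle_t$ by the standard differentiation of a Gibbs average: writing the interpolating Boltzmann weight as $\exp(\phi_t)$ with
\[
\phi_t = t\beta N\sqrt{1+\mathbf{m}^2} + (1-t)\beta \sum_{\mu=1}^{P}\psi^\mu \sum_{i=1}^{N}\xi_i^\mu \sigma_i
\]
and noticing that $\sum_i \xi_i^\mu \sigma_i = N m_\mu$, the $t$-derivative yields the covariance identity
\[
\frac{d}{dt}\langle F\rangle_t \;=\; \langle F\, \partial_t \phi_t\rangle_t - \langle F\rangle_t\langle \partial_t \phi_t\rangle_t, \qquad \partial_t\phi_t = \beta N\Bigl(\sqrt{1+\mathbf{m}^2} - \sum_{\mu=1}^P \psi^\mu m_\mu\Bigr).
\]
Since $\psi^\mu$ is a deterministic function of $\mathbf{M}$ (not of $\boldsymbol\sigma$), it passes freely through the average. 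Thus everything reduces to understanding the observable $N\bigl(\sqrt{1+\mathbf{m}^2} - \sum_\mu \psi^\mu m_\mu\bigr)$ in the critical regime.

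Next I would introduce the centered fluctuation $\hat{\mathbf{m}} = \sqrt{N}(\mathbf{m}-\mathbf{M})$ and Taylor expand around $\mathbf{m}=\mathbf{M}$. Writing $A := \sqrt{1+\mathbf{M}^2}$, a direct expansion gives
\[
N\sqrt{1+\mathbf{m}^2} = \frac{N}{A}\Bigl(A^2 + \tfrac{\mathbf{M}\cdot\hat{\mathbf{m}}}{\sqrt{N}} + \tfrac{\hat{\mathbf{m}}^2}{2N}\Bigr) - \frac{(\mathbf{M}\cdot\hat{\mathbf{m}})^2}{2A^3} + O(N^{-1/2}),
\]
while
\[
N\sum_{\mu}\psi^\mu m_\mu = \frac{N\mathbf{M}^2}{A} + \frac{\sqrt{N}\,\mathbf{M}\cdot\hat{\mathbf{m}}}{A}.
\]
This is the crucial algebraic step: the choice $\psi^\mu = M_\mu/A$ is exactly tuned so that the linear-in-$\hat{\mathbf{m}}$ terms cancel, and using $A - \mathbf{M}^2/A = 1/A$ one finds
\[
N\Bigl(\sqrt{1+\mathbf{m}^2} - \sum_\mu \psi^\mu m_\mu\Bigr) = \frac{N}{A} + \frac{\hat{\mathbf{m}}^2}{2A} - \frac{(\mathbf{M}\cdot\hat{\mathbf{m}})^2}{2A^3} + O(N^{-1/2}).
\]

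Finally, the proof closes by invoking the hypothesis ``above and close to the critical point''. Near criticality the Mattis overlaps are infinitesimal, so $A=\sqrt{1+\mathbf{M}^2}\to 1$; the cross term $(\mathbf{M}\cdot\hat{\mathbf{m}})^2/(2A^3)$ is of higher order in $\mathbf{M}$ compared with $\hat{\mathbf{m}}^2/(2A)$ and can be dropped, while the deterministic piece $N/A$ is independent of $\boldsymbol\sigma$ and drops out of the covariance $\langle F\cdot\rangle_t - \langle F\rangle_t\langle\cdot\rangle_t$. Substituting what remains,
\[
\partial_t \phi_t \;\sim\; \tfrac{\beta}{2}\hat{\mathbf{m}}^2 + \text{const},
\]
the covariance identity at the start yields exactly $\frac{d}{dt}\langle F\rangle_t \sim \frac{\beta}{2}\bigl(\langle F\hat{\mathbf{m}}^2\rangle_t - \langle F\rangle_t \langle \hat{\mathbf{m}}^2\rangle_t\bigr)$, as claimed.

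The main obstacle is controlling the remainder of the Taylor expansion, i.e.\ justifying that the $(\mathbf{M}\cdot\hat{\mathbf{m}})^2/A^3$ piece and the $O(N^{-1/2})$ tail are truly subleading inside the generalized average. This is where the Gaussian assumption on $\hat{\mathbf{m}}$ at high temperature (flagged by the footnote on the Wick theorem) enters: once $\hat{\mathbf{m}}$ has bounded Gaussian moments uniformly in $t\in[0,1]$ near $\beta_c^+$, and $\mathbf{M}\to 0$, the discarded terms are $O(\mathbf{M}^2)$ relative to the retained $\hat{\mathbf{m}}^2/(2A)$, which is precisely the meaning of the ``$\sim$'' in the statement.
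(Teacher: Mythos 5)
Your proposal is correct and follows essentially the same route as the paper: both start from the covariance identity for $d_t\langle F\rangle_t$ and then Taylor-expand the square roots in the critical regime, discarding the subleading terms in $\mathbf{M}$. Your version is simply a more carefully organized rendition of the paper's ``brute force'' computation --- expanding around $\mathbf{m}=\mathbf{M}$ and exhibiting explicitly how the choice $\psi^{\mu}=M_{\mu}/\sqrt{1+\mathbf{M}^2}$ cancels the terms linear in $\hat{\mathbf{m}}$, where the paper instead expands directly around zero overlap and absorbs the difference by adding and subtracting $(\beta N/2)\langle F\rangle_t\hat{\mathbf{m}}^2$.
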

\begin{proof}
The proof uses the fact that we are inspecting the critical behavior (hence we assume the Mattis overlap to move from zero continuously when reached a critical noise level) and works by direct brute force:
\be\label{bruttaforza}
\frac{d}{dt} \langle F \rangle_t = \beta N \left( \langle F \sqrt{1+\bold{m}^2} \rangle_t - \sum_{\mu=1}^{P}\frac{M_{\mu}}{\sqrt{1+\bold{M}^2}}\langle F m_{\mu}\rangle_t - \langle F \rangle_t \langle \sqrt{1+\bold{m}^2} \rangle_t + \frac{M_{\mu}}{\sqrt{1+\bold{M}^2}} \langle F \rangle_t \langle m_{\mu}\rangle_t  \right).
\ee
Expanding $\sqrt{1+x^2} \sim 1 + x^2/2$ and $1/(\sqrt{1+x^2}) \sim 1 - x^2/2$ in the above equation and adding and subtracting twice $(\beta N/2)\langle F \rangle_t \hat{m}^2$ we have the result. \hfill
\end{proof}
The above Proposition is the key to state the last
\begin{Theorem}
The centered and rescaled fluctuations of the Mattis overlap $\langle m_1 \rangle$ (quantifying the retrieval of pattern $\bold{\xi}^1$) behaves as
\be\label{fluct-theor}
\langle \hat{m}_1^2 \rangle = \langle \hat{m}_1^2 \rangle_{t=1} = \frac{1}{1-\beta},
\ee
thus the high-noise (i.e. {\em ergodic} to keep the physical jargon) region is limited to $\beta < \beta_c \equiv 1$.
\newline
In the low-noise (i.e. {\em broken-symmetry}) region the Mattis overlap may assume non-null values (see Fig.~$1$).
\end{Theorem}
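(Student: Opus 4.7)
The plan is to apply the streaming equation of the preceding Proposition to the test function $F = \hat{m}_1^2$, thereby turning it into a closed ODE for $u_t := \langle \hat{m}_1^2 \rangle_t$ along the interpolation $t \in [0,1]$; integrating from the Cauchy datum $u_0 = 1$ already computed in the text up to $t = 1$ will then yield formula (\ref{fluct-theor}), and the blow-up locus of the resulting expression will identify $\beta_c$.

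First I would substitute $F = \hat{m}_1^2$ into (\ref{streaming}) to obtain
\[
\dot{u}_t \sim \frac{\beta}{2}\Bigl(\langle \hat{m}_1^2\, \hat{\mathbf{m}}^2 \rangle_t - u_t\, \langle \hat{\mathbf{m}}^2 \rangle_t\Bigr).
\]
To close this in $u_t$ alone, I would invoke the Gaussianity hypothesis advertised in the footnote together with the pattern symmetry valid in the ergodic region $M_\mu \equiv 0$: specifically $\langle \hat{m}_\mu \hat{m}_\nu \rangle_t = \delta_{\mu\nu}\, u_t$, after which Wick's theorem supplies $\langle \hat{m}_1^4 \rangle_t = 3 u_t^2$ and $\langle \hat{m}_1^2\, \hat{m}_\mu^2 \rangle_t = u_t^2$ for $\mu \neq 1$. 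Summing over $\mu$ and subtracting, the right-hand side of the streaming equation collapses to $\beta\, u_t^2$, leaving the autonomous ODE $\dot{u}_t = \beta\, u_t^2$.

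Next I would integrate this separable ODE with initial datum $u_0 = 1$ to obtain $u_t = (1 - \beta t)^{-1}$; evaluated at $t = 1$ this gives exactly $\langle \hat{m}_1^2 \rangle = (1 - \beta)^{-1}$. Finiteness of the solution along the entire interpolation $[0,1]$ requires $\beta < 1$, so $\beta_c = 1$ emerges naturally as the threshold where the rescaled Mattis fluctuations diverge, marking the onset of the broken-symmetry regime in which $M_1$ is allowed to be nonzero (as shown in Fig.~1).

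The hard part will be justifying the Wick factorization, i.e.\ both the orthogonality $\langle \hat{m}_\mu \hat{m}_\nu \rangle_t = \delta_{\mu\nu}\, u_t$ and the Gaussian four-point formula, at intermediate $t > 0$ rather than only at the decoupled endpoint $t = 0$. At $t = 0$ this is routine CLT for independent spins in a one-body field; for $t > 0$ it must be interpreted at the same ``leading order near criticality'' level tolerated in the preceding Proposition (hence the symbol $\sim$). A fully quantitative treatment would require propagation-of-chaos type bounds on the higher cumulants of $\hat{\mathbf{m}}$ under $\langle\cdot\rangle_t$, or equivalently a Gaussian concentration estimate exploiting the convexity of $x \mapsto \sqrt{1+x^2}$, which I would not push through in this outline.
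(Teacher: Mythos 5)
Your proposal is correct and follows essentially the same route as the paper: apply the streaming equation to $F=\hat m_1^2$, close the resulting ODE as $\dot u_t=\beta u_t^2$ with Cauchy datum $u_0=1$, and evaluate the solution $u_t=(1-\beta t)^{-1}$ at $t=1$. The only difference is that you make explicit the Wick/symmetry bookkeeping ($3u_t^2+(P-1)u_t^2-Pu_t^2=2u_t^2$) that cancels the factor $\tfrac{1}{2}$, a step the paper leaves implicit.
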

\begin{proof}
Applying the streaming equation (\ref{streaming}) on $\langle \hat{m}_1^2 \rangle$ we get the following Cauchy-problem
\begin{eqnarray}
\frac{d}{dt}\langle \hat{m}_1^2 \rangle_t &=& \beta \langle \hat{m}_1^2 \rangle^2,\\
\langle \hat{m}_1^2 \rangle_{t=0} &=& 1,
\end{eqnarray}
whose solution is exactly eq. (\ref{fluct-theor}).
\hfill
\end{proof}

\section{Conclusions and outlooks} \label{sec:4}

In the last decade Artificial Intelligence has permeated our everyday lives and, as a natural consequence, the quest for novel and stronger tools to handle with it has become a urgent priority in different fields of Science \cite{DL1}. In particular, within the statistical mechanics route, heuristic and rigorous results appeared steadily in the Theoretical and Mathematical Physics Communities in the last years, typically with heuristic approaches first (as in the case of the standard Hopfield model whose analysis was pioneered by Amit, Gutfreund and Sompolinsky by replica-trick techniques \cite{Amit} and later confirmed by more rigorous techniques, see e.g., \cite{Agliari-Barattolo,Bovier1,Tala1}).

Along these lines, in this paper, relying on Guerra's interpolation schemes \cite{Guerra-LimTerm1,Guerra-SumRules}, we have rigorously analyzed the statistical mechanical picture of a new associative neural network, that is, the {\em relativistic Hopfield model} recently introduced in \cite{Albert}. More precisely, by an adaptation of the classical Guerra-Toninelli argument, we have proved that the infinite volume limit of the free energy exists and it is well defined for this model; we also gave its explicit expression in terms of the $P$ Mattis overlaps, namely the natural order parameters of the theory, that perfectly matches the results of \cite{Albert}. Once extremized the free energy over the Mattis overlaps, the $P$ self-consistent equations for the order parameters have also been obtained and, with them, a picture of the critical phase transition that the model undergoes when the noise level $\beta$ crosses the critical value $\beta_c=1$. To obtain the explicit expression for the free energy we required the self-averaging (in the infinite volume limit) of the overlaps: this assumption has been confirmed immediately after, by  inspecting their (centered and rescaled) fluctuations. The divergence of these fluctuations is found to happen at $\beta_c=1$: this result highlights a critical behavior similar to the one known for the standard Hopfield model in the low-storage regime, despite the presence of many-body contributions.
\newline
From a physical viewpoint, this may, at a first glance, sound weird -as $P$-spin systems are known to exhibit discontinuous phase transitions (see e.g., \cite{BarraPspin})- however an intuitive explanation for this is that, keeping the r.h.s. of eq. (\ref{series}) in mind, the pairwise-interaction provides the main driving force toward a broken phase, but the next one, i.e., the fourth-order in $\sigma$ (which is anyway small if compared to the previous order as the series (\ref{series}) obviously converges) appears with the opposite sign thus it drives the system toward a non-magnetized state $\bold{M} \to 0$. Hence we have to wait the even smaller contribution, i.e., sixth-order in $\sigma$, to have another strengthening toward a magnetized state, but this is actually negligible.
\newline
From a mathematical viewpoint, instead, the request (\ref{selfaveragingmM}) of the self-averaging of the order parameters gives also a hint that, for the high storage analysis, different approaches would be necessary. This can be understood with a scaling argument by observing that the request (\ref{selfaveragingmM}) breaks down if $P \propto N$ as the term inside the parenthesis scales as $N^{-1}$.
\newline
Finally, it is worth pointing out that the outlined mathematical scaffold stays robust against pattern's dilution, namely for multitasking networks \cite{Agliari-PRL1}, whose {\em relativistic extension} we plan to report soon.

\section*{Acknowledgments}
\noindent
A.B. and M.N. acknowledge Salento University. \\
E.A. and A.B. are grateful to GNFM-INdAM ({\em Progetto Giovani 2018}) for financial support.\\
E.A. acknowledges the grant {\em Progetto Ateneo} (RG11715C7CC31E3D) from Sapienza University of Rome.\\
A.B. also acknowledges MIUR (through basic funding to the Italian research), INFN and the grant {\em Rete Match: Progetto Pythagoras} (CUP:J48C17000250006) for financial support.

\section*{Appendix A: sub/super-additivity and Fekete Lemma}

\begin{Definition}[sub-additive, super-additive functions]
A function $f:A\rightarrow B$ is termed {\em sub-additive} (resp. {\em super-additive}) if
\begin{equation*}
\forall x,y\in A,\qquad f\left( x+y\right) \leq f\left( x\right) +f\left(
y\right)
\end{equation*}%
(resp.%
\begin{equation*}
\forall x,y\in A,\qquad f\left( x+y\right) \geq f\left( x\right) +f\left(
y\right) ~\text{).}
\end{equation*}
\end{Definition}
In the same way we can define the concept of sub-additive succession
in the following way:
\begin{Definition}[sub-additive, super-additive successions]
A succession $\left( a_{n}\right) _{n\in \mathbb{N}}$ is termed {\em sub-additive}
(resp. {\em super-additive}) if, $\forall n,m \in \mathbb{N}$
\begin{equation*}
a_{n+m}\leq a_{n}+a_{m}
\end{equation*}%
(risp.%
\begin{equation*}
a_{n+m}\geq a_{n}+a_{m}~\text{).}
\end{equation*}
\end{Definition}
\begin{Lemma}
\label{lemmaDiseguaglianza} Let $\left( a_{n}\right) _{n\in \mathbb{N}}$ be a sub-additive succession, then the follwing holds
\begin{equation*}
a_{nk}\leq ka_{n}\quad \forall k\in \mathbb{N}
\end{equation*}
\end{Lemma}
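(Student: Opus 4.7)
The plan is to proceed by induction on $k \in \mathbb{N}$, using only the defining inequality of a sub-additive succession, namely $a_{n+m} \leq a_n + a_m$ for all $n, m \in \mathbb{N}$. The parameter $n$ is held fixed throughout the argument, since the statement is ``for every $k$, $a_{nk} \leq k a_n$'' with $n$ playing the role of a fixed anchor.

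First I would dispatch the base case $k=1$, which is the trivial identity $a_{n \cdot 1} = a_n \leq a_n$. For the inductive step, I would assume the induction hypothesis $a_{nk} \leq k\, a_n$ and examine $a_{n(k+1)}$. The key observation is the purely arithmetic rewriting $n(k+1) = nk + n$, which allows a single application of sub-additivity:
\begin{equation*}
a_{n(k+1)} = a_{nk + n} \leq a_{nk} + a_n.
\end{equation*}
Plugging the induction hypothesis into the first summand yields $a_{n(k+1)} \leq k\, a_n + a_n = (k+1)\, a_n$, which closes the induction.

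There is essentially no obstacle here: the proof is a one-line induction whose only subtlety is making sure the splitting $nk + n$ is handled with the sub-additive inequality in the correct direction (as opposed to super-additive, where the inequality would reverse). Since the lemma is stated for successions indexed by $\mathbb{N}$ and the argument uses only closure of $\mathbb{N}$ under addition and multiplication, no additional regularity or positivity of $a_n$ is required. The resulting inequality is precisely the ingredient needed to invoke Fekete's Lemma, which, combined with Proposition~\ref{proposizioneModelloRelativisticoSubadditivo}, underwrites the existence of the thermodynamic limit $\alpha(\beta) = \inf_N \alpha_N(\beta)$ stated in the main theorem of Section~\ref{sec:2}.
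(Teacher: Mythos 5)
Your proof is correct and is essentially the same as the paper's: both arguments proceed by induction on $k$ with $n$ fixed, splitting $a_{n(k+1)}=a_{nk+n}\leq a_{nk}+a_n$ (the paper phrases the step as going from $k-1$ to $k$, but the content is identical). No gap.
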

\begin{proof}
A sketched proof is immediate by using the Induction Principle over $k$. Indeed, let $\left( a_{n}\right) _{n\in \mathbb{N}}$ be a sub-additive succession. If $k=1$ then
\begin{equation*}
a_{n}\leq 1a_{n}=a_{n}.
\end{equation*}%
Now let us pose $k>1$ and assume the thesis still hold for any natural up to $k-1$ and let us prove it for $k$ as well. To be true at $k-1$ it must be that
\begin{equation}
a_{n\left( k-1\right) }\leq \left( k-1\right) a_{n}  \label{2.0}
\end{equation}%
and thus that
\begin{equation}
a_{n\left( k-1\right) }+a_{n}\leq ka_{n}.  \label{2}
\end{equation}%
Note that, thanks to the sub-additivity property of the succession, we can write
\begin{equation}
a_{nk-n}+a_{n}\geq a_{nk-n+n}=a_{nk}.  \label{3}
\end{equation}%
Merging inequalities (\ref{2}) and (\ref{3}) the thesis follows trivially. \hfill
\end{proof}

\begin{Proposition}[Lemma di Fekete]
\label{lemmaFekete}For any sub-additive succession $\left( a_{n}\right)
_{n\in \mathbb{N}}$%
\begin{equation*}
\exists \lim_{n\rightarrow \infty }\frac{a_{n}}{n}=\inf_{n\in \mathbb{N}}%
\frac{a_{n}}{n}.
\end{equation*}
Similar conclusions can be drawn for super-additive successions as well and in this case
case we have that
\begin{equation*}
\exists \lim_{N\rightarrow +\infty }\frac{a_{n}}{n}=\sup_{n\in \mathbb{N}}%
\frac{a_{n}}{n}
\end{equation*}
\end{Proposition}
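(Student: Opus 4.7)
The plan is to prove Fekete's lemma by the standard Euclidean-division argument, using Lemma \ref{lemmaDiseguaglianza} (already established in the excerpt) to control $a_{nk}$ in terms of $a_n$. Set $L := \inf_{n \in \mathbb{N}} a_n / n \in [-\infty, +\infty)$. The inequality $\liminf_{n \to \infty} a_n / n \geq L$ is automatic from the definition of infimum, so the content of the lemma is the reverse inequality $\limsup_{n \to \infty} a_n / n \leq L$, after which squeezing yields both existence of the limit and its equality with $L$.

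First I would fix an arbitrary $\varepsilon > 0$ and, by definition of infimum, choose $m \in \mathbb{N}$ with $a_m / m \leq L + \varepsilon$ (in the case $L = -\infty$, one instead picks $m$ so that $a_m/m$ is smaller than any prescribed threshold; the rest of the argument is identical). Then for any $n \geq m$ I write the Euclidean division $n = k m + r$ with $k \in \mathbb{N}$ and $0 \leq r < m$. Invoking sub-additivity once gives $a_n = a_{km + r} \leq a_{km} + a_r$ (with the convention $a_0 = 0$, or simply treating the $r = 0$ case separately), and Lemma \ref{lemmaDiseguaglianza} then yields $a_{km} \leq k \, a_m$. Combining,
\begin{equation*}
\frac{a_n}{n} \leq \frac{k \, a_m}{n} + \frac{a_r}{n} = \frac{k m}{n} \cdot \frac{a_m}{m} + \frac{a_r}{n}.
\end{equation*}

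Next I would let $n \to \infty$ with $m$ fixed. Since $r$ stays in the finite set $\{0, 1, \ldots, m-1\}$, the quantity $a_r$ is bounded by $\max_{0 \leq r < m} |a_r|$, so $a_r / n \to 0$; meanwhile $k m / n = 1 - r/n \to 1$. Consequently $\limsup_{n \to \infty} a_n / n \leq a_m / m \leq L + \varepsilon$. Letting $\varepsilon \downarrow 0$ gives $\limsup_{n \to \infty} a_n / n \leq L$, and combined with the trivial $\liminf \geq L$, this proves $\lim_{n \to \infty} a_n / n = L = \inf_n a_n / n$. The super-additive case follows by applying the sub-additive statement to the sequence $b_n := -a_n$, which is sub-additive and satisfies $\inf_n b_n/n = -\sup_n a_n/n$.

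I do not expect any serious obstacle: the only delicate point is the bookkeeping when $L = -\infty$ (handled by the ``arbitrarily small threshold'' variant above) and the trivial observation that the finitely many remainder values $a_0, \ldots, a_{m-1}$ cannot spoil the limit once divided by $n \to \infty$. All other ingredients, including $a_{nk} \leq k a_n$, are already available in the paper.
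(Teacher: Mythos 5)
Your proof is correct. The paper itself states Fekete's lemma without proof (it is imported from \cite{fekete}), so there is no in-paper argument to compare against; your Euclidean-division argument is the standard one, and it is precisely what the paper's Lemma \ref{lemmaDiseguaglianza} ($a_{nk}\leq k\,a_{n}$) is set up to support. The delicate points are all handled properly: the convention $a_{0}=0$ (or treating $r=0$ separately, where $n=km$ and Lemma \ref{lemmaDiseguaglianza} applies directly), the case $\inf_{n}a_{n}/n=-\infty$ via an arbitrary threshold in place of $L+\varepsilon$, the fact that the finitely many remainder terms give $a_{r}/n\to 0$, and the reduction of the super-additive case to the sub-additive one through $b_{n}:=-a_{n}$.
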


\end{document}